\documentclass[journal,twoside,web]{ieeecolor}
\usepackage{lcsys}
\usepackage{textcomp}

\usepackage{amsmath}
\usepackage{amssymb}
\usepackage[short, nocomma]{optidef}
\usepackage{algorithm}
\usepackage{algpseudocode}
\usepackage{booktabs}
\usepackage{siunitx}
\usepackage{hyperref}
\usepackage{mathtools}
\usepackage{subcaption}
\hypersetup{
    colorlinks=true,
    linkcolor=black,
    filecolor=magenta,
    urlcolor=blue,
    citecolor=black,
}

\newtheorem{theorem}{Theorem}
\newtheorem{lemma}{Lemma}
\newtheorem{corollary}{Corollary}
\newtheorem{definition}{Definition}
\newtheorem{example}{Example}
\newtheorem{remark}{Remark}

\newcommand{\new}[1]{{#1}}

\newcommand{\R}{\mathbb{R}} 

\title{
Polynomial Constraints for Robustness Analysis of Nonlinear Systems
}

\author{Neelay Junnarkar$^{1}$, \IEEEmembership{Student Member, IEEE}, Peter Seiler$^{2}$, \IEEEmembership{Fellow, IEEE}, and Murat Arcak$^{1}$, \IEEEmembership{Fellow, IEEE}
\thanks{This work was in supported in part by the National Science Foundation award CNS-2111688 and Air Force Office of Scientific Research grant FA9550-21-1-0288 and FA9550-23-1-0732.}
\thanks{$^{1}$Neelay Junnarkar and Murat Arcak are with the department
of Electrical Engineering and Computer Sciences at the University of California, Berkeley. (emails: \{neelay.junnarkar, arcak\} @berkeley.edu)}
\thanks{
$^{2}$Peter Seiler is with the Department of Electrical Engineering and Computer Science, University of Michigan, Ann Arbor, MI 48109 USA. (email: \{pseiler@umich.edu\})
}
}

\begin{document}

\maketitle
\thispagestyle{empty}
\pagestyle{empty}

\begin{abstract}
	This paper presents a framework for abstracting uncertain or non-polynomial components of dynamical systems using polynomial constraints.
	This enables the application of polynomial-based analysis tools, such as sum-of-squares programming, to a broader class of non-polynomial systems.
	A numerical method for constructing these constraints is proposed.
	The relationship between polynomial constraints and existing integral quadratic constraints (IQCs) is investigated, providing transformations of IQCs into polynomial constraints.
	The effectiveness of polynomial constraints in characterizing nonlinearities is validated via numerical examples to compute inner estimates of the region of attraction for two systems.
\end{abstract}

\begin{IEEEkeywords}
Robust control, integral quadratic constraints, sum of squares programming, region of attraction
\end{IEEEkeywords}

\section{INTRODUCTION}

\IEEEPARstart{A} classic problem in nonlinear control is the problem of absolute stability: analyzing stability of a feedback loop of a linear time-invariant (LTI) system and a nonlinearity~\cite{KOKOTOVIC2001637}.
The traditional problem considers conditions under which stability can be guaranteed for any memoryless sector-bounded nonlinearity.
Integral quadratic constraints (IQCs) are a widely used tool that generalizes this type of analysis to many classes of nonlinearities and uncertainties~ \cite{megretskiSystemAnalysisIntegral1997},
offering computationally tractable analysis methods.
Libraries of IQCs are available for various classes of nonlinearities, such as sector-bounded and slope-restricted nonlinearities.

Despite their utility, quadratic constraints can result in conservative characterizations of nonlinearities.
For example, a sector bound poorly captures the asymptotic behavior of a saturating function such as \(\tanh\).
Tight bounds of nonlinearities, and therefore a tight characterization of the nonlinear system, lead to improved performance certificates, such as larger inner estimates of a region of attraction (ROA) or smaller  \(L_2\)-gain bounds.
Motivated by this, we propose a generalization of IQCs to higher-order polynomial constraints, enabling tighter characterizations of nonlinearities.
We present definitions of polynomial constraints, a numerical method to directly synthesize a class of polynomial constraints, and establish results on deriving them through polynomial transformation of IQCs.
Finally, using sum-of-squares (SOS) programming, we demonstrate the utility of polynomial constraints by verifying inner estimates of the regions of attraction for two systems where sector constraints result in conservative ROAs.

SOS programming enables optimization with constraints sufficient for polynomial nonnegativity, making SOS programming a useful computational tool in the analysis of polynomial dynamical systems~\cite{papachristodoulou_tutorial_2005}.
This framework has been extended to non-polynomial systems by characterizing the non-polynomial parts with IQCs~\cite{papachristodoulou_tutorial_2005, iannelli_roa_iqc}.
\new{Non-polynomial systems have also been analyzed by bounding remainders of truncated Taylor expansions of non-polynomial terms within a polytope~\cite{chesi_domain_2009}; the resulting interval bounds on the non-polynomial output are a special case of the polynomial constraints $p(v,w) \geq 0$ proposed in this paper.}
Quadratic constraints themselves have been extended, taking advantage of the capabilities of SOS programming, to non-homogenous quadratics and sector constraints which do not pass through the origin~\cite{newton_neural_2021}.
Changes of variables that lift to higher dimensional spaces have been used to apply SOS techniques to large classes of nonlinearities~\cite{papachristodoulou_analysis_2005}.

Section~\ref{sec:poly-constraints} introduces polynomial constraints.
\new{Section~\ref{sec:constructing-constraints} presents methods for constructing constraints.
	\new{Section~\ref{sec:transformations} presents results on transformations of constraints, providing a principled way to structure polynomial constraints.}
	Section~\ref{sec:roa-theory} establishes conditions under which a set can be certified to be a subset of the ROA for a non-polynomial system and formulates an SOS procedure to find an inner estimate of the ROA.
	Section~\ref{sec:numerical-examples} applies this procedure in two examples.
}

\subsection{Notation}

\(\mathcal{L}_2^n\), \(\mathcal{L}_{2e}^n\), and \(\mathcal{L}_\infty^n\) denote signals \([0, \infty) \to \mathbb{R}^n\) that are square integrable, locally square integrable, and bounded, respectively. \(\mathbb{R}[x]\) denotes polynomials in \(x\).
\(\Sigma[x]\) denotes sum of square polynomials in \(x\): \(p \in \Sigma[x]\) if there exist \(p_1, \dots, p_k \in \mathbb{R}[x]\) such that \(p = \sum_{i=1}^k p_i^2\).

\section{Polynomial Constraints}
\label{sec:poly-constraints}

\new{We consider systems of the form
	\begin{equation}\label{eq:system}
		\begin{aligned}
			\dot{x}(t) & = f(x(t), w(t)) \\
			v(t)       & = g(x(t))       \\
			w(t)       & = \Delta(v)(t)
		\end{aligned}
	\end{equation}
	where $f$ and $g$ are known polynomials, $x(t) \in \mathbb{R}^n$, $f(0, 0) = 0$, $g(0) = 0$, $\Delta(0) = 0$, and $\Delta: \mathcal{L}_{2e}^{n_v} \to \mathcal{L}_{2e}^{n_w}$ is a non-polynomial or uncertain operator.
	This work introduces tools for characterizing~$\Delta$.
}

\begin{definition}
	Let \(\Delta: \mathcal{L}_{2e}^{n_v} \to \mathcal{L}_{2e}^{n_w}\) be a causal operator mapping the signal \(v\) to \(w\) and let \(\Psi\) be a polynomial dynamical system defined as
	\begin{equation}\label{eq:ipc-filter}
		\begin{aligned}
			\dot{x}_\Psi(t) & = f_\Psi(x_\Psi(t), v(t), w(t)), \quad x_\Psi(0) = 0 \\
			z(t)            & = g_\Psi(x_\Psi(t), v(t), w(t))
		\end{aligned}
	\end{equation}
	where \(x_\Psi(t) \in \mathbb{R}^{n_\Psi}\), \(z(t) \in \mathbb{R}\), and \(f_\Psi\) and \(g_\Psi\) are polynomials such that \(f_\Psi(0, 0, 0) = 0\).
	We say \(\Delta\) satisfies the integral polynomial constraint (IPC) defined by \(\Psi\) if
	\begin{equation}\label{eq:ipc-def}
		\int_0^T z(t)dt \geq 0,\quad \forall T\geq 0, v \in \mathcal{L}_{2e}^{n_v}, w = \Delta(v)
	\end{equation}
\end{definition}

Special cases of this include the static polynomial constraint, where \(\Psi(v, w)(t) = p(v(t), w(t))\) for some polynomial \(p\), and the pointwise polynomial constraint, where \(\Psi(v, w)(t) \geq 0\) for all \(t \geq 0\).
The most common case we will discuss is the pointwise static polynomial constraint:
\begin{equation}\label{eq:pointwise-static-poly-constraint}
	\Psi(v, w)(t) = p(v(t), w(t)) \geq 0, \quad \forall t \geq 0
\end{equation}

\new{Integral quadratic constraints (IQCs)~\cite{megretskiSystemAnalysisIntegral1997} are the special case of IPCs with LTI filter dynamics and quadratic filter output.
	Specifically, for the LTI filter
	\begin{equation*}
		\begin{aligned}
			\dot{x}_\Psi(t) & = A_\Psi x_\Psi(t) + B_{\Psi v} v(t) + B_{\Psi w} w(t), \quad x_\Psi(0) = 0 \\
			\tilde{z}(t)    & = C_\Psi x_\Psi(t) + D_{\Psi v} v(t) + D_{\Psi w} w(t)
		\end{aligned}
	\end{equation*}
	with \(A_\Psi\) Hurwitz and \(M = M^\top \in \mathbb{R}^{n_z \times n_z}\), the time-domain hard IQC condition \(\int_0^T \tilde{z}(t)^\top M \tilde{z}(t)\,dt \geq 0\) is recovered by taking \(z = \tilde{z}^\top M \tilde{z}\), a quadratic in \((x_\Psi, v, w)\).}

\section{Constructing Polynomial Constraints}\label{sec:constructing-constraints}

\new{
	We present two methods to construct pointwise static polynomial constraints for non-polynomial functions:
	first using polynomial approximations, and then using nonlinear optimization.
	Section~\ref{sec:transformations} presents results to extend to construction of dynamic IPCs.
}

\new{
	\subsection{Polynomial Approximation} \label{sec:poly-approximation}
}

\new{A Pad\'e approximant $N(x)/D(x)$ approximates $f(x)$ by a ratio of polynomials.
	Given such an approximant, we construct the polynomial constraint}
\begin{equation}\label{eq:pade-approx-constraint}
	\begin{aligned}
		p(x,y) = & (\epsilon_1 x^k D(x) - (yD(x) - N(x)))       \\
		         & \cdot (\epsilon_2 x^k D(x) + (yD(x) - N(x)))
	\end{aligned}
\end{equation}
\new{where $\epsilon_1, \epsilon_2 > 0$ and $k \in \mathbb{N}$ are tunable parameters.
	This construction ensures $p(0, f(0)) = 0$, a useful property in stability analysis.
	A Taylor series truncation may be used in place of $N(x)/D(x)$, though this typically yields looser bounds for the same constraint degree.}

\new{To build intuition of polynomial constraints, we also note that polynomial constraints can be constructed by inspection.}
For the function \(\tanh(x)\), a soft saturation, any local sector bound will be conservative due to the local linear nature and asymptotically saturating nature of the function.
With polynomial constraints, we can consider cubic functions in \(y\) that better approximate \(\tanh\), constructing the polynomial constraint \(p(x, y) = (x - y^3 - y)(\frac{1}{6}y^3 + y - x)\).
A deadzone-style nonlinearity such as \(x-\tanh(x)\) might instead consider a cubic in \(x\), and in general we may rotate polynomial constraints through linear combinations of \(x\) and \(y\).
As another example, consider \(e^x - x - 1\), an exponential with the affine term removed, which we can locally bound with quadratics in \(x\), constructing \(p(x, y) = (y - 0.3x^2)(0.7x^2-y)\).

Figure~\ref{fig:simple-constraints} plots example polynomial constraints, constructed by inspection and by using polynomial approximations.

\begin{figure}[t]
	\centering
	\begin{subfigure}[b]{0.48\textwidth}
		\centering
		\includegraphics[width=\textwidth]{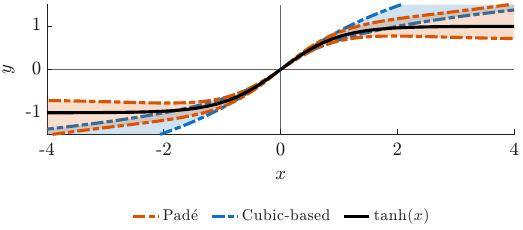}
		\caption{Constraints on $\tanh(x)$. \new{Shaded regions indicate nonnegativity.}}
		\label{fig:tanh}
	\end{subfigure}
	\hfill
	\begin{subfigure}[b]{0.48\textwidth}
		\centering
		\includegraphics[width=\textwidth]{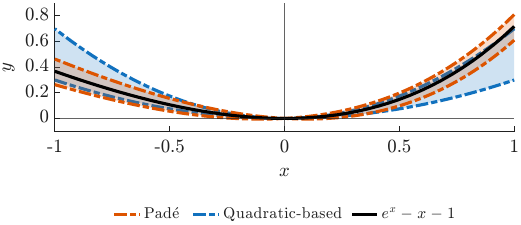}
		\caption{Constraints on $e^x - x - 1$. \new{Shaded regions indicate nonnegativity.}}
		\label{fig:exp}
	\end{subfigure}
	\caption{ 
		Local polynomial constraints on two non-polynomial functions.
		The quadratic-based constraint is a 4th degree polynomial and the cubic-based and Pad\'e approximant-based constraints are 6th degree.
		The Pad\'e approximant-based constraints are both computed using \eqref{eq:pade-approx-constraint} with \(\epsilon_1 = \epsilon_2 = 0.1\), \(k = 1\), and the \([3/2]\) Pad\'e approximant.
	}
	\label{fig:simple-constraints}
\end{figure}

\subsection{Numerical Synthesis} \label{sec:numerical-constraint-synth}

Although simple to construct, approximation-based polynomials lack the tuning flexibility required to incorporate knowledge of the dynamical system into the constraints:
there may exist a constraint that is tighter over the specific local region we are analyzing, or a constraint that is tighter in just the lower or upper bound, only one of which may be of importance given the dynamical system.
To address this, we introduce the following numerical method to construct polynomial constraints, which gives flexibility in tuning the constraint.
This optimization program, which we solve with Matlab's \texttt{fmincon}, may be initialized with constraints constructed via the above-mentioned methods.

We numerically solve for polynomial constraints by solving the optimization program
\begin{mini!}[2]
{p \in \mathbb{R}[x]}{f_0(p)}{\label{eq:synth-direct}}{}
\addConstraint{p(x_{i}, \Delta(x_i))}{\geq 0,}{\quad i=1,\dots,N}{\label{eq:synth-direct-holds}}
\addConstraint{\|\mathrm{coefficients}(p)\|}{=1}{\label{eq:synth-direct-coeff-norm}}
\end{mini!}
where the objective function \(f_0(p)\), described below, is defined to encourage tight constraints, \eqref{eq:synth-direct-holds} ensures \(\Delta\) satisfies the polynomial constraint \(p\) on a constraint evaluation set \(\{x_i\}_{i=1}^{n_c}\), and \eqref{eq:synth-direct-coeff-norm} constrains the scale of the polynomial constraint to ensure the coefficients aren't simply taken to \(0\).
This last constraint leverages that polynomial constraints are invariant under positive scalings, and scaling a polynomial is equivalent to scaling its coefficients.

As a surrogate for tight transformations, we define the objective \(f_0\) to evaluate \(p\) on a set of test points \(\{x_{t,i}, y_{t,i}\}_{i=1}^{n_t}\), penalize points which evaluate nonnegative, and reward points which evaluate negative:
\begin{equation}\label{eq:synth-objective}
	\begin{aligned}
		f(x, y) & = w_o(x, y) \tanh\left(s \cdot p(x, y)\right) \\
		f_0(p)  & = \sum_{i=1}^{n_t} f(x_{t,i}, y_{t,i})
	\end{aligned}
\end{equation}
The parameter \(s > 0\) is a tunable parameter to match the scale of \(\tanh\) to the values of \(p(x, y)\) in the test set, and \(w_o\) is a weighting function: for example, \(w_o(x,y) = 1 + e^{-x^2}\) for tighter constraints near the origin.
We construct a collection of test points with the following process: first, sample points \(\{x_{t,i}\}_{i=1}^{n_{tx}}\) in the domain of \(\Delta\); second, for each \(x_{t,i}\), sample points \(\{y_{t,i,j}\}_{j=1}^{n_{ty}}\) in a neighborhood (of configurable shape) of \(\Delta(x_{t,i})\); and finally, construct test points \(\bigcup_{i=1}^{n_{tx}} \{(x_{t,i}, y_{t,i,j}) \ | \ j=1,\dots,n_{ty}\}\).

While \eqref{eq:synth-direct} ensures construction of a polynomial
\(p\) such that \(\Delta\) satisfies \(p\) and penalizes nonnegative \(p(x, y)\) near the graph of \(\Delta\), it does not constrain or penalize points far from the graph.
Thus, constraints synthesized with this program often require an additional polynomial constraint that \(y\) lie in an interval near the origin, with bounds depending on bounds on \(\Delta\) in the region being verified.
Additionally, because \(p(x, \Delta(x)) \geq 0\) is only ensured at a finite list of evaluation points, we verify the constraint \(p\) after synthesis by using root finders to search for roots of \(p(x, \Delta(x))\) and
ensure there are no roots in the desired local region.

\new{
	\section{Transformations of Polynomial Constraints}\label{sec:transformations}
}
In this section we present results on ensuring the synthesized polynomial constraints are not unnecessarily conservative, and a method for constructing IPCs that leverages the existing body of IQCs.
\begin{theorem} \label{thm:transformation-general}
	Consider \(\Delta\) and \(\tilde{\Delta}\), both operators from \(\mathcal{L}_{2e}^{n_v}\) to \(\mathcal{L}_{2e}^{n_w}\).
	Suppose there exists a polynomial dynamical system \(H: \mathcal{L}_{2}^{n_v} \times \mathcal{L}_{2}^{n_w} \to \mathcal{L}_{2}^{n_v} \times \mathcal{L}_{2}^{n_w} \) such that the restriction of \(H\) to the graph of \(\Delta\),
	\(h = H|_{\mathrm{graph}\ \Delta}\), is an invertible operator from the graph of \(\Delta\) to the graph of \(\tilde{\Delta}\).
	Then \(\tilde{\Delta}\) satisfies the IPC defined by \(\Psi\) if and only if \(\Delta\) satisfies the IPC defined by \(\Psi \circ h\).
\end{theorem}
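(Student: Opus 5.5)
This is essentially a change of variables on the quantifier in the IPC definition \eqref{eq:ipc-def}. We read ``\(\Delta\) satisfies the IPC defined by \(\Psi\)'' as a statement about pairs \((v,w)\) in \(\mathrm{graph}\ \Delta = \{(v,\Delta(v))\}\): for every such pair, the output \(z = \Psi(v,w)\) has \(\int_0^T z(t)\,dt \ge 0\) for all \(T \ge 0\). The filter is started at \(x_\Psi(0)=0\), so \(\Psi\) acts as a well-defined operator from input pairs to output signals. Likewise, \(\Psi\circ h\) is the cascade of \(h\) (the restriction of the polynomial system \(H\) to \(\mathrm{graph}\ \Delta\)) followed by \(\Psi\). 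We interpret ``\(\Delta\) satisfies the IPC defined by \(\Psi\circ h\)'' as \(\int_0^T (\Psi\circ h)(v,\Delta(v))(t)\,dt \ge 0\) for all \(T\ge0\) and all \(v\). Under this reading, \(\Psi\circ h\) need not itself have the normalized form \eqref{eq:ipc-filter}; it is simply the cascade realized by stacking the states of \(H\) and \(\Psi\).

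The key step is the set identity \(h(\mathrm{graph}\ \Delta) = \mathrm{graph}\ \tilde\Delta\). This follows directly from the hypothesis that \(h\) is an invertible operator from \(\mathrm{graph}\ \Delta\) onto \(\mathrm{graph}\ \tilde\Delta\). Surjectivity gives \(\supseteq\), and \(\subseteq\) holds because \(h\) maps into \(\mathrm{graph}\ \tilde\Delta\).

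For the direction (\(\Rightarrow\)), assume \(\tilde\Delta\) satisfies the IPC defined by \(\Psi\) and take any \(v\). Then \((\tilde v,\tilde w) := h(v,\Delta(v))\) lies in \(\mathrm{graph}\ \tilde\Delta\), so \(\tilde w = \tilde\Delta(\tilde v)\). The output \((\Psi\circ h)(v,\Delta(v)) = \Psi(\tilde v,\tilde w)\) therefore has nonnegative integral on every \([0,T]\).

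For the direction (\(\Leftarrow\)), take any \(\tilde v\) and set \(\tilde w = \tilde\Delta(\tilde v)\). By surjectivity (using \(h^{-1}\)) there is some \((v,\Delta(v))\) with \(h(v,\Delta(v)) = (\tilde v,\tilde w)\). Hence \(\Psi(\tilde v,\tilde w) = (\Psi\circ h)(v,\Delta(v))\), and its integral is nonnegative for all \(T\). Only surjectivity is really used here; injectivity only makes the correspondence one-to-one.

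The main obstacle is a technical mismatch of signal spaces. \(H\) is defined on \(\mathcal{L}_2\), while the IPC quantifies over \(v\in\mathcal{L}_{2e}\). I would resolve this using causality. Because \(H\), \(\Psi\) and \(\Delta\) are causal, the value \(\int_0^T z\,dt\) depends only on the signals restricted to \([0,T]\), which are square integrable. It therefore suffices to check the inequality on truncations, provided the graph correspondence is compatible with truncation. I would state this as an implicit standing assumption, namely that \(h\) and \(h^{-1}\) are causal so that they commute with truncation, rather than prove it. If this is not assumed, I would instead read the theorem with \(\mathcal{L}_2\) inputs throughout, where the argument above is already complete.
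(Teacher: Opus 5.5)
Your proposal is correct and follows the paper's proof essentially verbatim: for (\(\Rightarrow\)) you push \((v,\Delta(v))\) forward through \(h\) into \(\mathrm{graph}\ \tilde\Delta\), and for (\(\Leftarrow\)) you pull \((\tilde v,\tilde\Delta(\tilde v))\) back through \(h^{-1}\), exactly as the paper does. The paper quantifies over \(v,\tilde v\in\mathcal{L}_2^{n_v}\) without comment, which matches your fallback reading, so your remarks that surjectivity suffices and about the \(\mathcal{L}_2\)/\(\mathcal{L}_{2e}\) mismatch add care rather than change the route.
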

\begin{proof}
	If \(\tilde{\Delta}\) satisfies \(\Psi\), \(T \geq 0\), and \(v \in \mathcal{L}_{2}^{n_v}\), then \(\int_0^T (\Psi \circ h)(v, \Delta(v))(t)dt = \int_0^T \Psi(\tilde{v}, \tilde{\Delta}(\tilde{v}))(t)dt \geq 0\), so \(\Delta\) satisfies \(\Psi \circ h\).
	If \(\Delta\) satisfies \(\Psi \circ h\), \(T \geq 0\), and \(\tilde{v} \in \mathcal{L}_{2}^{n_v}\), then \(\int_0^T \Psi(\tilde{v}, \tilde{\Delta}(\tilde{v}))(t)dt = \int_0^T (\Psi \circ h \circ h^{-1}) (\tilde{v}, \tilde{\Delta}(\tilde{v}))(t)dt = \int_0^T (\Psi \circ h)(v, \Delta(v))(t)dt \geq 0\), so \(\tilde{\Delta}\) satisfies \(\Psi\).
\end{proof}

Note that \(H\) need not be invertible and \(h^{-1}\) need not be polynomial.
To construct transformations \(h\), the following properties are useful.

\begin{lemma}
	Let \(\Delta\), \(\tilde{\Delta}\), and \(h\) be defined as in Theorem~\ref{thm:transformation-general}.
	Let \(h_1\) and \(h_2\) be the component functions of \(h\) so that \(h(v, \Delta(v)) = (h_1(v, \Delta(v)), h_2(v, \Delta(v))\).
	Then \(h\) is invertible if and only if \(h_1\) is invertible.
\end{lemma}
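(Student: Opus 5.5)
The key observation is that both graphs are parametrized by their first coordinate. Every element of \(\mathrm{graph}\ \Delta\) has the form \((v,\Delta(v))\) with \(v\) free, and every element of \(\mathrm{graph}\ \tilde{\Delta}\) has the form \((\tilde v,\tilde\Delta(\tilde v))\). So \(h\) should be invertible exactly when its first component, viewed as a map \(v \mapsto h_1(v,\Delta(v))\) on \(\mathcal{L}_{2e}^{n_v}\), is invertible. I interpret "\(h_1\) invertible" in this sense: as a map from the domain of \(\Delta\) onto the domain of \(\tilde\Delta\). The first step is to record that, since \(h\) maps into \(\mathrm{graph}\ \tilde\Delta\), for every \(v\) we have
\[
h_2(v,\Delta(v)) = \tilde\Delta\bigl(h_1(v,\Delta(v))\bigr).
\]
Thus \(h = (\mathrm{id},\tilde\Delta)\circ \hat h_1 \circ \pi\), where \(\pi(v,\Delta(v)) = v\) is the bijective projection of the graph onto the input space and \(\hat h_1(v) = h_1(v,\Delta(v))\).

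For the direction "\(h_1\) invertible \(\Rightarrow\) \(h\) invertible", I would construct the inverse explicitly as
\[
h^{-1}(\tilde v,\tilde\Delta(\tilde v)) = \bigl(\hat h_1^{-1}(\tilde v),\ \Delta(\hat h_1^{-1}(\tilde v))\bigr).
\]
This lands in \(\mathrm{graph}\ \Delta\). Composing with \(h\) in either order gives the identity, using the displayed identity for \(h_2\). Equivalently, \(h\) is a composition of three bijections: \(\pi\), then \(\hat h_1\), then the graph map \(\tilde v\mapsto(\tilde v,\tilde\Delta(\tilde v))\), which is a bijection onto \(\mathrm{graph}\ \tilde\Delta\).

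For the converse, suppose \(h\) is a bijection between the graphs. Then \(\hat h_1 = \tilde\pi\circ h\circ \pi^{-1}\), where \(\tilde\pi\) is the projection of \(\mathrm{graph}\ \tilde\Delta\) onto its first coordinate. Since \(\tilde\pi\) is a bijection because \(\tilde\Delta\) is a function, \(\hat h_1\) is a composition of bijections and hence invertible, with inverse \(\pi\circ h^{-1}\circ\tilde\pi^{-1}\).

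The main obstacle is not computational but definitional: pinning down what "\(h_1\) invertible" means, and over which domains (\(\mathcal{L}_2\) versus \(\mathcal{L}_{2e}\)). I would state explicitly that \(h_1\) is regarded as a map of \(v\) alone via \(\hat h_1\), and that the domain of \(\tilde\Delta\) is its whole input space, so that surjectivity onto the graph coincides with surjectivity of \(\hat h_1\) onto the input space. Once these conventions are fixed, both directions reduce to the fact that the graph of a function projects bijectively onto its domain.
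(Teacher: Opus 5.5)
Your proof is correct and follows the paper's argument. The paper also writes \(h = g \circ h_1\), where \(g:\tilde v \mapsto (\tilde v, \tilde\Delta(\tilde v))\) is a bijection whose inverse is the first-coordinate projection, and reads off \(h_1^{-1} = h^{-1}\circ g\) and \(h^{-1} = h_1^{-1}\circ g^{-1}\). The only difference is where \(h_1\) lives: the paper treats it as a map on \(\mathrm{graph}\ \Delta\) and handles \(v \mapsto h_1(v,\Delta(v))\) in a later corollary, whereas your extra bijection \(\pi\) folds that corollary into the same argument.
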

\begin{proof}
	Define \(g: \tilde{v} \mapsto (\tilde{v}, \tilde{\Delta}(\tilde{v}))\).
	Note that \(g\) is invertible and \(g^{-1}\) is the map projecting onto the first element.
	We can define \(h\) in terms of \(h_1\) by \(h = g \circ h_1\) and \(h_1\) in terms of \(h\) by \(h_1 = g^{-1} \circ h\).
	If \(h\) is invertible, then \(h_1^{-1} = h^{-1} \circ g\).
	If \(h_1\) is invertible, then \(h^{-1} = h_1^{-1} \circ g^{-1}\).
\end{proof}

\begin{corollary} \label{cor:tildeDelta}
	\(\tilde{\Delta} = h_2 \circ h_1^{-1}\).
\end{corollary}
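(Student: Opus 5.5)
The plan is to unwind the identities from the proof of the preceding lemma. As there, let \(g:\tilde v\mapsto(\tilde v,\tilde\Delta(\tilde v))\) be the graph map of \(\tilde\Delta\). It is a bijection from \(\mathcal{L}_{2e}^{n_v}\) onto \(\mathrm{graph}\ \tilde\Delta\), with inverse the projection onto the first component. Throughout, I regard \(h_1\) and \(h_2\) as maps defined on \(\mathrm{graph}\ \Delta\). Since \(h\) takes values in \(\mathrm{graph}\ \tilde\Delta\), every point \(h(v,\Delta(v))\) has the form \((\tilde v,\tilde\Delta(\tilde v))\), and its first component is \(\tilde v=h_1(v,\Delta(v))\).

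The key step is to compare second components. From the previous paragraph, for every \(v\) we have
\[
h_2(v,\Delta(v))=\tilde\Delta\bigl(h_1(v,\Delta(v))\bigr),
\]
that is, \(h_2=\tilde\Delta\circ h_1\) as maps on \(\mathrm{graph}\ \Delta\). Equivalently, this is the second-component reading of the identity \(h=g\circ h_1\) from the lemma's proof.

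Next I need \(h_1\) to be invertible. By the lemma, \(h_1\) is invertible because \(h\) is, and its inverse maps \(\mathcal{L}_{2e}^{n_v}\) (or whatever common signal space \(\tilde\Delta\) is defined on) onto \(\mathrm{graph}\ \Delta\). Composing \(h_2=\tilde\Delta\circ h_1\) on the right with \(h_1^{-1}\) then gives \(h_2\circ h_1^{-1}=\tilde\Delta\circ h_1\circ h_1^{-1}=\tilde\Delta\), which is the claim.

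The only delicate point is bookkeeping about domains. The identity \(h_1\circ h_1^{-1}=\mathrm{id}\) must hold on the full domain of \(\tilde\Delta\), so \(h_1\) has to be surjective onto that domain. This follows because \(h\) is onto \(\mathrm{graph}\ \tilde\Delta\): every \(\tilde v\) arises as \(h_1(v,\Delta(v))\) for some \(v\). There is also a harmless mismatch between the \(\mathcal{L}_2\) and \(\mathcal{L}_{2e}\) spaces in the statement of Theorem~\ref{thm:transformation-general}. I would address it by a remark that the identity holds on whichever signal class \(h\) maps \(\mathrm{graph}\ \Delta\) onto.
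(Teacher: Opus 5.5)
Your proposal is correct and follows the argument the paper intends: the paper states the corollary without proof, as an immediate consequence of the identity \(h = g\circ h_1\) from the lemma's proof. Reading off second components gives \(h_2 = \tilde{\Delta}\circ h_1\), and composing on the right with \(h_1^{-1}\) yields the claim. Your remark that \(h_1\) is onto the domain of \(\tilde{\Delta}\), because \(h\) is onto \(\mathrm{graph}\ \tilde{\Delta}\), is exactly the bookkeeping needed.
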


\begin{corollary}
	\(h\) is invertible if and only if \(v \mapsto h_1(v, \Delta(v))\) is invertible.
\end{corollary}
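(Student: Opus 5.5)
The plan is to reduce this corollary to the preceding Lemma by composing with the graph map of \(\Delta\), just as the Lemma's proof used the graph map of \(\tilde{\Delta}\). Define \(g_\Delta : v \mapsto (v, \Delta(v))\), a bijection from \(\mathcal{L}_{2}^{n_v}\) (or whatever domain of inputs is being considered) onto \(\mathrm{graph}\ \Delta\). Its inverse is the projection onto the first component, restricted to the graph.

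The map in the statement is \(k : v \mapsto h_1(v, \Delta(v))\). By definition of \(g_\Delta\), we have \(k = h_1 \circ g_\Delta\), and equivalently \(h_1 = k \circ g_\Delta^{-1}\) on \(\mathrm{graph}\ \Delta\). Since \(g_\Delta\) is a bijection, invertibility passes in both directions by composition:
\begin{itemize}
\item if \(h_1\) is invertible, then \(k^{-1} = g_\Delta^{-1} \circ h_1^{-1}\);
\item if \(k\) is invertible, then \(h_1^{-1} = g_\Delta \circ k^{-1}\).
\end{itemize}
Thus \(h_1\) is invertible if and only if \(k\) is invertible. The preceding Lemma states that \(h\) is invertible if and only if \(h_1\) is invertible, and chaining the two equivalences gives the claim.

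The only step needing care is the bookkeeping of domains and codomains. Here \(h_1\) maps \(\mathrm{graph}\ \Delta\) into the input space of \(\tilde{\Delta}\), and invertibility in the Lemma is understood onto that space. So \(k\) must be regarded as a map from the input space of \(\Delta\) onto the same target set. With that reading, the argument is the same two-line composition used in the Lemma's proof.
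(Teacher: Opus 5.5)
Your proof is correct and is essentially the argument the paper leaves implicit, since it states the corollary without proof. Composing $h_1$ with the bijection $v \mapsto (v, \Delta(v))$ onto $\mathrm{graph}\ \Delta$ shows that $h_1$ is invertible exactly when $v \mapsto h_1(v,\Delta(v))$ is, and chaining this with the Lemma gives the claim, mirroring how the Lemma's own proof uses the graph map of $\tilde{\Delta}$.
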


The above results on equivalence of IPCs enables the construction of an IPC on one operator by polynomial transformation of an existing IPC on another operator.
However, showing two operators are related by a polynomial transformation may in general be difficult.
We instead propose starting with an operator \(\Delta\) for which an IPC is desired, and then constructing \(h\) so that the operator \(h_2 \circ h_1^{-1}\), which we define to be \(\tilde{\Delta}\), satisfies a specified class of IQCs.
For example, constructing \(h\) so that \(h_2 \circ h_1^{-1}\) is sector-bounded or slope-restricted by imposing constraints on the coefficients of \(h\) and therefore satisfying sector constraints and Zames-Falb multipliers respectively.
The class of IQCs on \(h_2 \circ h_1^{-1}\) is then transformed into a class of IPCs on \(\Delta\).
This procedure yields
IPCs parameterized by the IQC multipliers, and enables the use of polynomial transformations to create less conservative constraints.
Conservativeness can be lowered further by parameterizing the class of
transformations such that \(h_2 \circ h_1^{-1}\) satisfies the desired IQC instead of picking a single transformation \(h\).

As an illustration, we demonstrate a transformation between \(\tanh\) and a function which satisfies the sector bound \([-1, 1]\) tightly, and use this to construct a constraint on \(\tanh\), recovering the familiar \([0, 1]\) sector bound.
\begin{example} \label{ex:loop}
	Let \(\Delta\) be \(\tanh\)
	and consider the linear transformation \(H \in \R^{2 \times 2}\).
	A sufficient condition for invertibility of \(h\) is invertibility of \(H\).
	Now we use the above procedure to construct a linear transformation \(H\) such that \(\tilde{\Delta}\) satisfies the sector bound \([-1, 1]\).
	A quadratic constraint for this bound is \(M = \mathrm{diag}(1, -1)\).
	Let \(H\) be as follows.
	\[
		H = \begin{bmatrix}
			a & b \\ c & d
		\end{bmatrix}
	\]
	Then \(\Delta\) satisfies \(H^\top M H\) if and only if \((ax + b\tanh(x))^2 - (cx + d\tanh(x))^2 \geq 0\) for all \(x\).
	To select an \(H\) that constructs a tight bound, we use the slope restrictions on \(\tanh\) to show \((ad - bc)/(b^2 - d^2) = 1/2\).
	One solution for this is \(a = 1, b = -1/2, c = 1, d = -3/2\).
	This corresponds to the following \(H\) and polynomial constraint on \(\tanh\):
	\[
		H = \begin{bmatrix}
			1 & -\frac{1}{2} \\ 1 & -\frac{3}{2}
		\end{bmatrix}, \quad H^\top M H = \begin{bmatrix}
			0 & 1 \\ 1 & -2
		\end{bmatrix}.
	\]
	This is the usual sector bound in \([0, 1]\) as expected.
\end{example}

Applying a simple case of this transformation method to \eqref{eq:synth-direct}, we search for \(h\) which transforms a sector constraint into a polynomial constraint on \(\Delta\).
Specifically, we search for an invertible transformation \(h\) that maps the graph of \(\Delta\) to a function that satisfies the quadratic constraint defined by \(\begin{bsmallmatrix} 0 & 0.5 \\ 0.5 & -1 \end{bsmallmatrix}\).
This implies that \(\Delta\) satisfies \(h(x, \Delta(x))^\top M h(x, \Delta(x)) \geq 0\) for all \(x\), which simplifies to \(h_2(x, \Delta(x))\big(h_1(x, \Delta(x)) - h_2(x, \Delta(x))\big) \geq 0\) for all \(x\).
We define \(p(x,y) = h_2(x, y)\big( h_1(x,y) - h_2(x, y) \big)\).
A sufficient condition for invertibility of \(h\) is that the derivative of \(x \mapsto h_1(x, \Delta(x))\) be strictly positive.
Thus, we can numerically solve for a transformation \(h\) by modifying \eqref{eq:synth-direct} to enforce that \(p = h_2 \cdot (h_1 - h_2)\) and add the constraint  \(h_1^\prime(x_i, \Delta(x_i)) > 0\).

\section{Application to Computing Regions of Attraction}\label{sec:roa-theory}

We demonstrate the utility of IPCs by computing inner-estimates of the region of attraction for a non-polynomial system.
We bound the non-polynomial parts of the system with IPCs and apply methods for polynomial systems using SOS programming.

\new{Combining \eqref{eq:system} with the IPC filter \eqref{eq:ipc-filter} yields the extended system with augmented state $x_e = (x, x_\Psi) \in \mathbb{R}^{n + n_\Psi}$:
	\begin{equation}\label{eq:extended-system}
		\begin{aligned}
			\dot{x}_e(t) & = f_e(x_e(t), w(t)) \\
			v(t)         & = g_e(x_e(t))       \\
			z(t)         & = h_e(x_e(t), w(t)) \\
			w(t)         & = \Delta(v)(t)
		\end{aligned}
	\end{equation}}
\begin{theorem}\label{thm:roa}
	\new{Consider the system \eqref{eq:system} with extended system \eqref{eq:extended-system}.}
	Suppose \(\Delta\) satisfies the IPC defined by \(\Psi\), and that \new{\(\Delta\) is bounded input bounded output (BIBO) stable: \(\Delta(v) \in \mathcal{L}_\infty\) whenever \(v \in \mathcal{L}_\infty\)}.
	If there exists a positive definite, continuously differentiable function \(V: \mathbb{R}^{n + n_\Psi} \to \mathbb{R}\), a nonnegative \(s_\Psi \in \mathbb{R}\), and \(\epsilon > 0\) such that \(\{x_e | V(x_e) \leq c\}\) is compact and
	\begin{equation}\label{eq:roa-cond}
		\begin{aligned}
			\nabla & V(x_e)^\top f_e(x_e, w) + s_\Psi h_e(x_e, w) \leq - \epsilon x_e^\top x_e
		\end{aligned}
	\end{equation}
	for all \(w \in \mathbb{R}^{n_w}\) and \(x_e \in \mathbb{R}^{n + n_\Psi}\) such that \(V(x_e) \leq c\), then \(\{x_e | V(x_e) \leq c\} \cap \{x_e = (x, x_\Psi) | x_\Psi = 0\}\) is an inner estimate of the ROA of \eqref{eq:system}.
\end{theorem}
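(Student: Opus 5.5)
The argument is a dissipation-inequality argument for the extended system \eqref{eq:extended-system}. Fix an initial condition $x(0)$ with $x_e(0) = (x(0), 0)$ and $V(x_e(0)) \le c$. Since the IPC filter starts from $x_\Psi(0)=0$, this is exactly the initialization under which the IPC \eqref{eq:ipc-def} applies. Write $\Omega_c = \{x_e \mid V(x_e) \le c\}$. The proof has three steps: (i) invariance of $\Omega_c$ along the trajectory, (ii) integrability of $x_e^\top x_e$, and (iii) convergence to zero.

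\textbf{Step 1 (invariance).} Consider any interval $[0,T]$ on which the solution exists and stays in $\Omega_c$. On that interval \eqref{eq:roa-cond} holds pointwise with $w(t)=\Delta(v)(t)$. Integrating gives
\[
V(x_e(T)) - V(x_e(0)) + s_\Psi \int_0^T z(t)\,dt \le -\epsilon \int_0^T x_e^\top x_e\,dt .
\]
Because $s_\Psi \ge 0$ and $\int_0^T z\,dt \ge 0$ by the IPC, we get $V(x_e(T)) \le V(x_e(0)) - \epsilon\int_0^T |x_e|^2dt \le c$.

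To make this rigorous, I would argue by contradiction using the first exit time $T^* = \sup\{T : x_e(t)\in\Omega_c \ \forall t\in[0,T]\}$. A subtlety is that the trajectory might sit on $\partial\Omega_c$ and leave. The strict inequality (the $-\epsilon|x_e|^2$ term) helps here: at any time with $x_e \ne 0$ the integral bound gives $V(x_e(T^*)) < c$, unless $x_e \equiv 0$, which is trivially invariant. Continuity then extends the trajectory in $\Omega_c$ beyond $T^*$, contradicting maximality.

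There is a second subtlety. The IPC only applies to $v \in \mathcal{L}_{2e}$, and $v$ is defined only up to $T^*$. I would handle this by extending $v$ arbitrarily (for example by zero) beyond $T^*$ and invoking causality of $\Delta$, so that $\int_0^T z\,dt\ge0$ holds for $T\le T^*$. Existence of solutions, and the absence of finite escape, follow from compactness of $\Omega_c$. We need $w$ bounded there, which the BIBO assumption gives: $v = g_e(x_e)$ is bounded on the compact set, hence $w\in\mathcal{L}_\infty$ and $\dot x_e$ is bounded.

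\textbf{Step 2 (integrability).} From the integrated inequality, $\epsilon\int_0^\infty |x_e|^2 dt \le V(x_e(0)) - \min_{\Omega_c} V \le c$. Hence $x_e\in\mathcal{L}_2$, and in particular $x\in\mathcal{L}_2$.

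\textbf{Step 3 (convergence).} By Step 1 and BIBO stability, $x_e$ and $w$ are bounded. Polynomials are continuous, so $f_e(x_e,w)$ is bounded and $\dot x_e \in \mathcal{L}_\infty$. Then $x_e$ is uniformly continuous, and Barbalat's lemma applied to $|x_e|^2$ gives $x_e(t)\to0$, so $x(t)\to0$. Hence every $x(0)$ with $(x(0),0)\in\Omega_c$ lies in the ROA.

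I expect the main obstacle to be Step 1. The delicate part is justifying that the IPC may be applied before we know the closed-loop signals are globally defined, which is where causality and the extension of $v$ are needed. The boundary-exit argument is also delicate. I would handle both via the maximal-interval and contradiction argument described above, and if the strict-decrease reasoning at the boundary proves awkward, fall back on a slightly smaller level set $c' < c$.
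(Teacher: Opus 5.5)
Your proposal is correct and follows the paper's proof step for step. You integrate \eqref{eq:roa-cond} along the closed loop and use the IPC with $s_\Psi \ge 0$ to get invariance of $\{V \le c\}$, read $x_e \in \mathcal{L}_2$ off the same integrated bound, and use compactness plus BIBO stability to get $\dot x_e \in \mathcal{L}_\infty$ before applying Barbalat's lemma. Your first-exit-time and causality-extension arguments make rigorous what the paper leaves implicit, since the paper integrates \eqref{eq:roa-cond} without checking that the trajectory stays where the inequality holds. One case is not covered by your integral bound: when $T^* = 0$ and $V(x_e(0)) = c$, it gives no strict decrease. That case needs a local argument instead. With $w$ bounded on a short horizon (BIBO plus causality), continuity of the left side of \eqref{eq:roa-cond} keeps it below roughly $-\epsilon|x_e(0)|^2 < 0$ near $x_e(0)$, so $V$ decreases strictly at first. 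Your fallback to $c' < c$ would only cover the open sublevel set $\{V < c\}$.
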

\begin{proof}
	Consider a trajectory of the system such that \(V(x_e(0)) \leq c\).
	\new{Integrating \eqref{eq:roa-cond} from $0$ to $T$ gives
		$V(x_e(T)) - V(x_e(0)) + s_\Psi \int_0^T h_e(x_e, w)\,dt
			\leq -\epsilon \int_0^T x_e^\top x_e\,dt.$
		For the actual trajectory where $w = \Delta(v)$, the IPC gives $\int_0^T h_e(x_e, w)\,dt \geq 0$.
		Since $s_\Psi \geq 0$, it follows that $V(x_e(T)) \leq V(x_e(0)) \leq c$, establishing forward invariance of $\{x_e \mid V(x_e) \leq c\}$.}
	Forward invariance implies \(x_e \in \mathcal{L}_\infty\) by the assumption that the sub-level set is compact.
	\new{Since $g_e$ is continuous and $x_e$ remains in the compact sublevel set, $v = g_e(x_e) \in \mathcal{L}_\infty$, and therefore $w \in \mathcal{L}_\infty$ by the BIBO assumption on $\Delta$.
		Since $f_e$ is continuous and $(x_e, w)$ remains bounded, $\dot{x}_e = f_e(x_e, w) \in \mathcal{L}_\infty$.}
	Additionally, integrating \eqref{eq:roa-cond} shows \(\epsilon \int_0^T x_e(t)^\top x_e(t) dt \leq V(x_e(0))\), so \(x_e \in \mathcal{L}_2\).
	Therefore, \(x_e(t) \to 0\) as \(t \to \infty\) because \(x_e \in \mathcal{L}_2\) and \(\dot{x}_e \in \mathcal{L}_\infty\) by Barbalat's lemma.
	The intersection of \(\{x_e | V(x_e) \leq c\}\) with \(\{x_\Psi = 0\}\) is due to the filter initial condition.
\end{proof}

\begin{remark}\label{rmk:pointwise}
	If \(\Delta\) satisfies \(\Psi\) pointwise, then \(s_\Psi\) can be a nonnegative function of \(x_e\) and \(w\).
\end{remark}

We formulate the ROA estimation problem using SOS programming.
For this section, \(\Psi\) is a pointwise static polynomial constraint, enabling additional SOS multipliers (see Remark~\ref{rmk:pointwise}).
Applying Theorem~\ref{thm:roa}, we seek a positive definite \(V\) and \(s_c, s_\Psi \in \Sigma[(x, w)]\) satisfying the following, where \(\ell_x = \epsilon x^\top x\), \(\ell_{xw} = \epsilon (x^\top x + w^\top w)\), and \(\epsilon \approx 10^{-6}\):
\begin{equation}\label{eq:opt-roa-cond}
	\begin{aligned}
		 & -\nabla V(x)^\top f(x, w) - \ell_{xw}    \\
		 & \hphantom{-} - s_c(x, w)(-V(x) + 1)      \\
		 & \hphantom{-} - s_\Psi(x, w)p(g(x, w), w)
	\end{aligned}
	\in \Sigma[(x, w)]
\end{equation}
To address the bilinearity in \(s_c(x,w)V(x)\), we use a standard alternation~\cite{4283013, iannelli_roa_iqc, chou_synthesizing_2023} between fixing \(V\) and solving for multipliers (expansion step), and fixing multipliers and solving for \(V\) (reshape step).

\begin{algorithm}
	\caption{ROA Estimation via SOS}\label{alg:roa-synth}
	\begin{algorithmic}[1]
		\Require $V_0(x)$ \State $V \gets V_0$
		\While{not \text{converged}}
		\State \textbf{Expand:} fix $V$, max $c$ s.t. \eqref{eq:opt-roa-cond} via bisect. $\to (s_c^*, c^*)$
		\State \textbf{Reshape:} fix $s_c^*$, find $V$ s.t. \eqref{eq:opt-roa-cond}, $c \in [0.9c^*, 0.99c^*]$
		\EndWhile
		\State \Return $\{x \mid V(x) \leq c\}$
	\end{algorithmic}
\end{algorithm}

We use a hyperparameter \(n_V\) for the degree of \(V\) and \(n_\text{total}\) for the degree of each of the polynomials in the constraints in each SOS program.
The degrees of each \(s\)-certificate are determined from \(n_\text{total}, n_V\), and the degrees of any fixed polynomials such as \(p\).
Our implementation\footnote{\url{https://github.com/neelayjunnarkar/polynomial-constraints/}} uses the \textsc{SOSTOOLS}~\cite{sostools} package in Matlab.

\new{\begin{remark}
		If Algorithm~\ref{alg:roa-synth} returns feasible $V$ and $s_\Psi$, then the conditions of Theorem~\ref{thm:roa} are satisfied by construction and the returned sublevel set is an inner estimate of the ROA.
	\end{remark}}

\new{\begin{remark}
		When the algorithm stalls, increase \(n_\text{total}\), then \(n_V\), checking for increase in ROA estimate volume.
		If neither helps, the polynomial constraint may be the bottleneck: evaluate \(p(g(x), \Delta(g(x)))\) along the ROA boundary.
		If \(p \approx 0\) there, the constraint provides no relaxation at the boundary and a tighter or wider-domain constraint is needed.
		Multiple constraints can be used simultaneously via S-procedure, similar to searching in their conic hull.
	\end{remark}}

\new{\begin{remark}
		The SOS programs scale combinatorially with state dimension and \(n_\text{total}\), a standard limitation of SOS-based methods that the IPC framework inherits without worsening.
		For fixed \(n_\text{total}\), replacing a static quadratic IQC with a degree-\(d_p\) polynomial constraint reduces the degree of \(s_\Psi\) by \(d_p - 2\), keeping the SOS program size comparable.
		Further, static pointwise polynomial constraints can achieve tight characterizations of nonlinearities that would otherwise require dynamic IQC filters that increase the state size.
	\end{remark}}

\section{Examples}
\label{sec:numerical-examples}

\subsection{Triple Integrator}

Consider the following triple-integrator system with a saturation on the control input, and a state-feedback controller designed using LQR with weights \(Q=I\) and \(R=1\).
\begin{equation}
	\begin{aligned}
		\dot{x}_1 & = x_2, \quad \dot{x}_2  = x_3                                                               \\
		\dot{x}_3 & = \tanh(u) = \tanh\left(\underbrace{\begin{bsmallmatrix}
				                                                -1 & -2.4142 & -2.4142
			                                                \end{bsmallmatrix}}_{K}x\right)
	\end{aligned}
\end{equation}

We take \(\Delta\) to be the difference between the dynamics and the linearization: \(w = \Delta(v) = \tanh(v) - v\) where \(v = Kx\).
This gives a model in the form of \eqref{eq:system}:
\begin{equation}
	\begin{aligned}
		\dot{x}_1 & = x_2, \quad \dot{x}_2  = x_3             \\
		\dot{x}_3 & = Kx + \Delta(Kx) \\
		\Delta(v) & = \tanh(v) - v.
	\end{aligned}
\end{equation}
We use the method from Section~\ref{sec:constructing-constraints} with weighting function \(w_o(v, w) = 1 + \exp(-w^2)\) to construct a 6th degree local polynomial constraint on \(\Delta\).
The test set of points is designed to penalize high \(|w|\) because the linearization is stable.
This constraint, \(p_\text{num}\), is valid for  \(v \in [-4, 4]\) with the additional constraint that \(w \in [-3, 3]\).
We find benefit in using \(p_\text{num}\) in combination with a \([3/2]\) Pad\'e approximant-based constraint, \(p_\text{pad\'e}\), with \(k=1\), \(\epsilon_1=0.01\), and \(\epsilon_2=0.03\), which is valid for the same \(v\) as \(p_{\text{num}}\).
This constraint is generally looser for \(|w| > |\tanh(v)-v|\) but tighter otherwise, and is incorporated into the SOS programs with its own \(s\)-certificate.
Using a list of constraints is similar to searching over constraints in the conic hull of the list.
The coefficients of these polynomials are available in the associated code repository.
We construct local sector constraints for \(\Delta\) and find the largest ROA with the sector bound valid for \(v \in [-2.1, 2.1]\).
The sector constraint is \( p_\text{sector}(v, w) = -w(w + 0.5379v)\).
The constraints \(p_\text{sector}\), \(p_{\text{num}}\), and \(p_{\text{pad\'e}}\) are plotted in Figure~\ref{fig:triple-int-constraints}.

\begin{figure}[tbp]
	\centering
	\includegraphics[width=\linewidth]{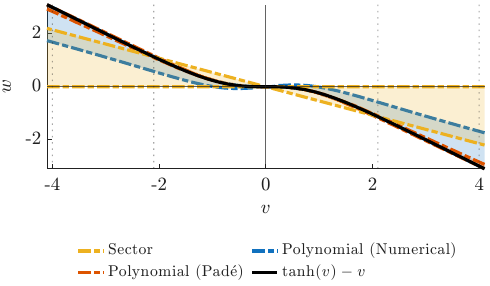}
	\caption{Triple Integrator: Local constraints on \(\tanh(v) - v\). Shaded regions indicate nonnegativity.
		The numerically-constructed polynomial constraint requires the additional constraint that \(w \in [-3, 3]\) for the constraint to be valid for \(v \in [-4, 4]\) to exclude regions where \(|w| \gg |v|\).
		The Pad\'e approximation-based constraint is valid for \(v \in [-4, 4]\), and the sector constraint is valid for \([-2.1, 2.1]\).
		The polynomial constraints give tighter bounds on \(\tanh(v) - v\) than does the sector constraint, while also being valid for a wider region.}
	\label{fig:triple-int-constraints}
\end{figure}

We then use Algorithm~\ref{alg:roa-synth} to synthesize ROAs.
We add additional constraints to the SOS programs to ensure that the verified sub-level sets \(V(x) \leq c\) are subsets of the region where the constraints are valid: \((1 + s_d(x, w))q(x, w) - s_n(x,w)(c-V(x)), s_d(x,w), s_n(x,w) \in \Sigma[(x,w)]\) where \(q(x,w) \geq 0\) represents the region where the constraints are valid.
To compute the ROA estimate using the sector constraint, we compute 50 iterations of Algorithm~\ref{alg:roa-synth} with \(n_V = 2, n_\text{total}=6\), and an initial Lyapunov function from the certificate of the LQR controller.
To compute the ROA estimate using the polynomial constraints, we initialize with the sector-based estimate after 30 iterations, and take 20 additional iterations of \(n_V = 2, n_\text{total}=6\).

\begin{figure}[tbp]
	\centering
	\includegraphics[width=\linewidth]{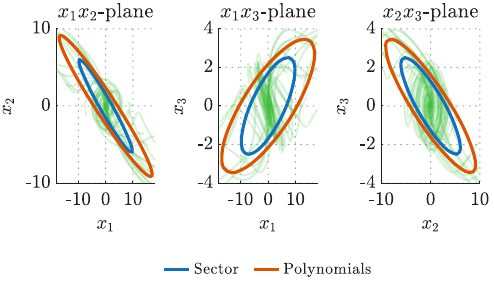}
	\caption{Triple Integrator: Regions of attraction, synthesized using different constraints on \(\Delta\), projected onto planes. The ROA computed using the polynomial constraints has a larger volume.
		Trajectories converging to the origin plotted in green.}
	\label{fig:triple-int-projections}
\end{figure}

Figure~\ref{fig:triple-int-projections} shows the projections of the resulting regions of attraction onto the \(x_1 x_2, x_1 x_3, x_2 x_3\) planes and Table~\ref{tab:triple-int-table} compares the volumes of the ROAs and the computation times (for the polynomial constraints, the time to compute 30 iterations with \(p_\text{sector}\) is included).
All programs were run on an Intel i5-6600K processor.
The ROA computed using the polynomial constraints is \(3.38\) times larger.

\begin{table}[htbp]
	\renewcommand{\arraystretch}{1.3}
	\caption{Triple Integrator}
	\label{tab:triple-int-table}
	\centering
	\begin{tabular}{@{}l S[table-format=3.1] S[table-format=1.2]@{}} 
		\toprule
		\textbf{Constraint Type} & {\textbf{Volume (Relative)}} & {\textbf{Computation Time (s)}} \\
		\midrule
		Sector                   & {153.70  (1.00x)}            & 265                             \\
		Polynomials              & {519.71  (3.38x)}            & 309                             \\
		\bottomrule
	\end{tabular}
\end{table}

\subsection{System with Exponential}

Consider the following system that involves an exponential, which is non-polynomial and asymmetric.
\begin{equation}\label{eq:exp-sys}
	\begin{aligned}
		\dot{x}_1 & = -x_1 - x_2 + e^{x_1} - 1 \\
		\dot{x}_2 & = x_1 - x_2
	\end{aligned}
\end{equation}
In addition to the origin, this system has an equilibrium at approximately \((1.26, 1.26)\).
We examine the ROA of the origin.
As before, we take \(\Delta\) to be the difference between the dynamics and the linearization, so \(\Delta(x_1) = e^{x_1} - x_1 - 1\), and we analyze the ROA of the origin of the following system.
\begin{equation}\label{eq:exp-sys-delta-form}
	\begin{aligned}
		\dot{x}_1   & = -x_2 + \Delta(x_1) \\
		\dot{x}_2   & = x_1 - x_2          \\
		\Delta(x_1) & = e^{x_1} - x_1 - 1
	\end{aligned}
\end{equation}

Using the method from Section~\ref{sec:constructing-constraints} and the objective weighting function \(w_o(v, w) = 1\) if \(v \geq 0\) and \(w_o(v, w) = 0.5\) if \(v < 0\), we construct a 6th degree local polynomial constraint on \(w = \Delta(v)\), which is valid for \(v \in [-4, 2]\) with the additional constraint that \(w \in [0, 3.1]\).
The weighting function is designed to be tight where \(v > 0\).
Coefficients are available in the associated code repository.
We compare with the local sector condition \(p_\text{sector}(v, w) = (0.318v - w)(0.368v + w)\), which is valid for \(v \in [-1, 0.53]\).
The upper-bound of \(0.53\) on \(v\) was the largest bound we found for which the ROA synthesis problem was feasible.
Visualizations in Figure~\ref{fig:exp-sys-constraints}.

\begin{figure}[tbp]
	\centering
	\includegraphics[width=\linewidth]{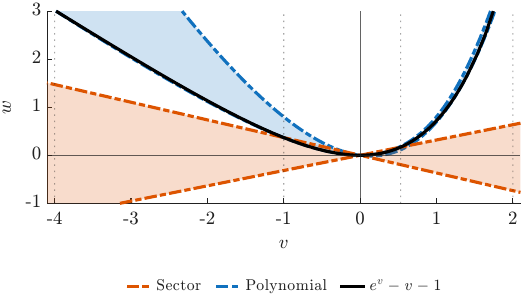}
	\caption{System with Exponential: Local constraints on \(e^v - v - 1\).
		Shaded regions indicate nonnegativity.
		The sector constraint is valid for \(v \in [-1, 0.53]\).
		The polynomial constraint is valid for \(v \in [-4, 2]\) with the additional constraint that \(w \in [0, 3.1]\) to exclude regions where \(|w| \gg |v|\).
		The polynomial constraints give tighter bounds than does the sector constraint, in particular in the upper bound where \(v > 0\), which is important for stability analysis of \eqref{eq:exp-sys}.}
	\label{fig:exp-sys-constraints}
\end{figure}

We synthesize ROAs using Algorithm~\ref{alg:roa-synth}, with an initial quadratic Lyapunov function constructed by solving the Lyapunov equation for the linearized system with \(Q = I\).
When using the sector constraint, we compute 10 iterations with \(n_V = 2, n_\text{total}=6\) followed by 5 iterations with \(n_V = 6, n_\text{total}=10\), beyond which we found no benefit in further iterations.
When using the polynomial constraint, we compute 5 iterations of \(n_V = 2, n_\text{total}=6\), followed by 5 iterations of \(n_V=4, n_\text{total}=8\), and 90 iterations of \(n_V=6, n_\text{total}=10\).
We compare the ROA computed using the polynomial constraint after both 15 total iterations and after 100 total iterations with the ROA computed using the sector constraint.

Figure~\ref{fig:exp-sys-roas} plots the ROAs and Table~\ref{tab:exp-sys-table} compares the volumes and computation times.
The ROAs computed using the polynomial constraint are significantly larger than the one computed using the sector constraint.
In addition, they closely follow the boundary of the region of attraction, unlike the ROA computed using the sector constraint.

\begin{figure}[tbp]
	\centering
	\includegraphics[width=\linewidth]{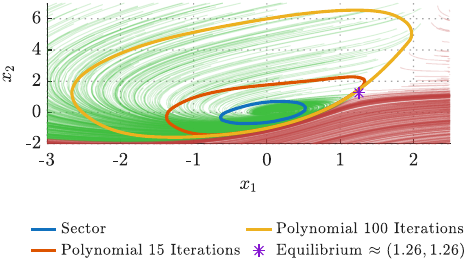}
	\caption{System with Exponential: Inner estimates of ROA of origin, computed using two different characterizations of \(\Delta\).
		The green lines denote trajectories converging to the origin and the red lines denote trajectories diverging from the origin.
		The ROAs computed using the polynomial constraint have larger areas, with additional computation leading to larger ROAs.
	}
	\label{fig:exp-sys-roas}
\end{figure}

\begin{table}[htbp]
	\renewcommand{\arraystretch}{1.3}
	\caption{System with Exponential}
	\label{tab:exp-sys-table}
	\centering
	\begin{tabular}{@{}l S[table-format=3.1] S[table-format=1.2]@{}} 
		\toprule
		\textbf{Constraint Type}   & {\textbf{Volume (Relative)}} & {\textbf{Computation Time (s)}} \\
		\midrule
		Sector                     & {1.16 (1.00x)}               & 67                              \\
		Polynomial, 15 Iterations  & {5.99 (5.16x)}               & 190                             \\
		Polynomial, 100 Iterations & {25.6 (16.0x)}               & 1593                            \\
		\bottomrule
	\end{tabular}
\end{table}

\new{\section{Conclusion}
	This paper introduced polynomial constraints, a generalization of quadratic constraints for characterizing non-polynomial components of dynamical systems, and showed that they integrate naturally with SOS programming.
	Numerical synthesis and transformation methods were presented for constructing polynomial constraints, and two examples demonstrated ROA estimates $3.38$-$16\times$ 
    larger than sector-bound baselines.
	A limitation, shared with IQC-based SOS methods, is that programs scale combinatorially with state dimension.
	Future directions include transforming dynamic IQCs such as Zames-Falb multipliers into  classes of polynomial constraints, 
    and data-driven synthesis of polynomial constraints.
}






\bibliography{refs}


\end{document}